\begin{document}
\newcommand{\dyle}{\displaystyle}
\newcommand{\R}{{\mathbb{R}}}
\newcommand{\Hi}{{\mathbb H}}
\newcommand{\Ss}{{\mathbb S}}
\newcommand{\N}{{\mathbb N}}
\newcommand{\Rn}{{\mathbb{R}^n}}
\newcommand{\F}{{\mathcal F}}
\newcommand{\ieq}{\begin{equation}}
\newcommand{\eeq}{\end{equation}}
\newcommand{\ieqa}{\begin{eqnarray}}
\newcommand{\eeqa}{\end{eqnarray}}
\newcommand{\ieqas}{\begin{eqnarray*}}
\newcommand{\eeqas}{\end{eqnarray*}}
\newcommand{\f}{\hat{f}}
\newcommand{\Bo}{\put(260,0){\rule{2mm}{2mm}}\\}
\newcommand{\1}{\mathlarger{\mathlarger{\mathbbm{1}}}}


\theoremstyle{plain}
\newtheorem{theorem}{Theorem} [section]
\newtheorem{corollary}[theorem]{Corollary}
\newtheorem{lemma}[theorem]{Lemma}
\newtheorem{proposition}[theorem]{Proposition}
\def\neweq#1{\begin{equation}\label{#1}}
\def\endeq{\end{equation}}
\def\eq#1{(\ref{#1})}


\theoremstyle{definition}
\newtheorem{definition}[theorem]{Definition}
\newtheorem{remark}[theorem]{Remark}
\numberwithin{figure}{section}

\title[It\^o vs Stratonovich in cosmology]{It\^o versus Stratonovich \\ in a stochastic cosmological model}

\author[C. Escudero, C. Manada]{Carlos Escudero, Carlos Manada}
\address{}
\email{}

\keywords{Stochastic differential equations, Random dynamical systems, It\^o vs Stratonovich dilemma,
Finite time blow-up vs global existence, Hubble parameter, Cosmology.
\\ \indent 2010 {\it MSC: 60H05, 37H05, 60H10, 83F05}}

\date{\today}

\begin{abstract}
In this work we study a stochastic version of the Friedmann acceleration equation. This model has been proposed in the cosmology literature
as a possible explanation of the uncertainty found in the experimental quantification of the Hubble parameter. Its noise has been tacitly
interpreted in the Stratonovich sense. Herein we prove that this interpretation leads to a positive probability of finite time blow-up of
the solution, that is, of the Hubble parameter. In contrast, if we just modify the noise interpretation to that of It\^o, then the solution
globally exists almost surely. Moreover, the expected asymptotic behavior is found under this interpretation too.
\end{abstract}
\maketitle

\section{Introduction}

The Hubble parameter is a measure of the rate at which the universe expands. As such, it is a fundamental quantity in cosmology.
Despite its importance, or perhaps as a consequence of it, there is currently an uncertainty in its experimental quantification.
In reference \cite{john}, this uncertainty is theoretically approached by means of a stochastic generalization of the Friedmann
acceleration equation; that is, the Hubble parameter is considered to be a stochastic process rather than a deterministic function of time.
For the sake of completeness we offer a derivation of the model studied in \cite{john};
to this end we follow \cite{chamizo}, \cite{foster}, \cite{john}, and \cite{misner}. Let us
start considering the Einstein field equations
\begin{equation}
\label{einsteineq}
R_{\mu\nu}=8\pi G\left(T_{\mu\nu}-\frac{1}{2}Tg_{\mu\nu}\right),
\end{equation}
with the Friedmann-Lema\^itre-Robertson-Walker metric
\begin{equation}
\label{robertsonwalker}
ds^2=-dt^2+C^2(t)\left(\frac{dr^2}{1-kr^2}+r^2d\theta^2+r^2\sin^2\theta d\varphi^2\right),
\end{equation}
and the energy-momentum stress tensor
\begin{equation}
\label{stresstensor}
T_{\mu\nu}=(p+g)\delta_{\mu}^{0}\delta_{\nu}^{0}+pg_{\mu\nu}.
\end{equation}
Herein $G$ denotes the Newton gravitational constant, $g_{\mu\nu}$ the metric tensor, $C(t)$ the scale factor,
$k$ is the Gaussian curvature, and $p$ the pressure.
If $R_{\mu\nu}$ is the Ricci tensor for metric \eqref{robertsonwalker}, we have $R_{00}=-3\frac{C''}{C}$, and from \eqref{stresstensor}, $T=T_{\lambda}^{\lambda}=T_{\lambda\nu}g^{\nu\lambda}=3p-\rho$, and $T_{00}=\rho+p+pg_{00}=\rho$, where $\rho$ is the mass density; then from \eqref{einsteineq} we find
\begin{equation}
\label{cos1p0}
-3\frac{C''}{C}=4\pi G(\rho+3p).
\end{equation}
If we introduce the Hubble parameter $H:=\frac{C'}{C}$, this equation becomes the Friedmann acceleration equation
\begin{equation*}
\dot{H}=-H^2-\frac{4\pi G}{3}(\rho+3p).
\end{equation*}
On the other hand for $\mu=\nu=1$, we have $R_{11}=\frac{CC''+2(C')^2+2k}{1-kr^2}$ and $T_{11}=pg_{11}=\frac{C^2p}{1-kr^2}$, and
from \eqref{einsteineq} we find
\begin{equation}
\label{cos1p1}
CC''+2(C')^2+2k=4\pi G(\rho-p)C^2.
\end{equation}
Now multiply \eqref{cos1p0} by $C^2/3$ and sum it to \eqref{cos1p1} to obtain the Friedmann equation
\begin{equation}
\label{cos1p2}
(C')^2+k=\frac{8\pi G}{3}\rho \, C^2.
\end{equation}
From \eqref{cos1p1} and \eqref{cos1p2} we deduce that $p$ and $\rho$ only depend on $t$. If we moreover restrict ourselves to flat models (i.e. $k=0$),
and assume the proportionality $p=\xi \rho$, with $\xi$ a parameter yet to be specified, we get
\begin{equation*}
\dot{H}=-\frac{3}{2}H^2 \left(1+\xi \right).
\end{equation*}
If we were to consider a deterministic model, and furthermore, following \cite{john}, a model in which the condition $3p+\rho=0$ is fulfilled,
then we had to conclude $\xi=-\frac{1}{3}$. Then $C''=0$ by \eqref{cos1p0}, and therefore, if $C(0)>0$ and $C'(0)>0$, then $H(0)>0$ and
$$
H(t) = \frac{H_0}{1 + H_0 t},
$$
where $H_0:=H(0)$ and we have
$$H(t) t \rightarrow 1 \qquad \text{when} \quad t\rightarrow\infty.$$
Some of the properties of this model, such as uniqueness and global existence of the solution, along with its
asymptotic behavior, could be expected to hold in a stochastic counterpart of it. It is one of the goals of this work
to specify some conditions that guarantee so.

Now, again following \cite{john}, if we assume that the condition $3p+\rho=0$ has only to be fulfilled on average, we could write
$\sigma \xi'=\xi+\frac{1}{3}$, where $\xi'$ could be any zero mean stochastic process. For simplicity, and as in \cite{john}, consider
it is a standard Gaussian white noise to arrive at the model
\begin{equation*}
\dot{H}=-\frac{3}{2}H^2 \left(\frac{2}{3}+ \sigma \xi' \right),
\end{equation*}
where the noise amplitude $\sigma>0$ has been introduced for dimensional consistency.
This is a stochastic version of the Friedmann acceleration equation that can be slightly simplified by
introducing the rescaling $t \to \sigma^2 t$ and $H \to H/\sigma^2$ to find
\begin{equation}
\label{cos3}
\dot{H}=-\frac{3}{2}H^2 \left(\frac{2}{3}+ \xi' \right),
\end{equation}
where the rescaled time and Hubble parameter are obviously dimensionless (and their dimensional counterparts are trivially recovered
by undoing the rescaling).
Of course, this model is not precise, and it could only be made so by means of the introduction of a notion of stochastic integration.
This type of matter has been studied in much detail in physics \cite{lefever}. In \cite{john}, the authors tacitly (not explicitly) assume
the Stratonovich interpretation of noise, known to be the only one able to preserve the classical chain rule from ordinary differential calculus.
However, as we will show herein, this choice is somehow problematic. Precisely, it opens the possibility of finite time blow-ups for the Hubble parameter,
phenomena of difficult (or at least not completely straightforward) physical interpretation,
while this problematic disappears when the It\^o interpretation of noise is chosen.
Before starting with this program let us emphasize that equation \eqref{cos3} constitutes an effective model. Its noise could be originated
as a coarse-grained effect of quantum fluctuations~\cite{dms} or inhomogeneity of the distribution of matter in the universe~\cite{br},
or perhaps as a combination of these and other factors.
While approaching this question may open the door to improve this model, herein we do not enter into these physical considerations and
focus on the mathematical properties of equation~\eqref{cos3}.

The remainder of this article is as follows. In section \ref{secbuge} we compare the consequences of making precise \eqref{cos3}
via either the Stratonovich or the It\^o prescription; in particular, the Stratonovich interpretation of noise (the one tacitly assumed
in \cite{john}) shows finite time blow-ups of the Hubble parameter with positive probability.
In contrast, the It\^o prescription shows global existence almost
surely and the expected asymptotic behavior. In section \ref{seccon} we draw our main conclusions.
Finally, in the Appendix, we give a concise summary on the theory of stochastic
dynamical systems to cover some of the results employed in this work.

\section{Blow-up versus global existence}\label{secbuge}
\subsection{Stratonovich equation}

Equation \eqref{cos3} can be written in the precise manner:
\begin{equation}
\label{cosstr}
dH=-H^2dt-\frac{3}{2}H^2 \circ dW_t,
\end{equation}
if we assume the stochastic integral of Stratonovich. This is tacitly assumed by the authors in \cite{john}, who employ
the usual Leibnitz-Newton calculus rules with this equation, known to be valid only in the case of Stratonovich.
Due to the regularity of its terms, this equation will possess a unique strong solution as long as it remains bounded \cite{oksendal}.
If we further follow reference \cite{john} by \emph{formally} performing the change of variables $x:=\frac{1}{H}$, then \eqref{cosstr} transforms
into
\begin{eqnarray}\nonumber
dx &=& dt+\frac{3}{2}\circ dW_t \\ \label{cos4}
&=& dt+\frac{3}{2}dW_t,
\end{eqnarray}
where we have just employed the Stratonovich (or Leibnitz-Newton) calculus rules on the first line and the fact that
an additive noise equation has not to be interpreted (at least, in other sense rather than the one provided by the Wiener integral).
For this stochastic differential equation (SDE) we have the following result:

\begin{lemma}
The SDE \eqref{cos4} subject to an initial condition $x(0)=1/H_0 >0$ possesses a unique solution which is both strong and global and moreover fulfils
$$
\lim_{t\rightarrow \infty}\frac{x(t)}{t}=1 \quad \text{a.s.}
$$
\end{lemma}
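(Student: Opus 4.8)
The plan is to take advantage of the elementary structure of \eqref{cos4}: its drift and diffusion coefficients are constant, hence trivially globally Lipschitz, so the standard existence and uniqueness theorem for SDEs (see \cite{oksendal}) immediately yields a unique strong solution. The explicit form of this solution will then make both global existence and the claimed asymptotic behavior transparent.

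First I would integrate \eqref{cos4} directly. Using $W_0=0$ together with the initial condition $x(0)=1/H_0$, one obtains
$$x(t)=\frac{1}{H_0}+t+\frac{3}{2}\,W_t,\qquad W_t=\omega(t).$$
This process is adapted to the Brownian filtration and solves \eqref{cos4}, so it is the unique strong solution. Since the right-hand side is finite for every $t\geq 0$ and almost every $\omega$, the solution never blows up and is therefore global---in sharp contrast with the behavior we anticipate for the Stratonovich formulation.

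For the asymptotic claim I would divide the explicit formula by $t$:
$$\frac{x(t)}{t}=\frac{1}{H_0\,t}+1+\frac{3}{2}\,\frac{W_t}{t}.$$
As $t\rightarrow\infty$ the first term tends to zero, while the last term tends to zero almost surely by statement (1) of Proposition \ref{pro331} (equivalently, by the law of the iterated logarithm for Brownian motion). Hence $x(t)/t\rightarrow 1$ almost surely. The only substantive ingredient here is the almost-sure decay $W_t/t\rightarrow 0$, which is already available from Proposition \ref{pro331}; everything else is routine. I therefore do not expect a genuine obstacle in this lemma---the real difficulty of the paper should lie in the Stratonovich case, where one must instead \emph{establish} finite-time blow-up with positive probability.
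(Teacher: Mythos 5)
Your proposal is correct, but it takes a genuinely different and more elementary route than the paper. You simply observe that \eqref{cos4} has constant coefficients, integrate it to get the explicit solution $x(t)=\frac{1}{H_0}+t+\frac{3}{2}W_t$, and read off global existence and the almost sure limit $x(t)/t\rightarrow1$ from the sublinear growth of Brownian motion, i.e.\ statement (1) of Proposition \ref{pro331}. The paper instead routes the asymptotics through the random dynamical systems machinery of Section \ref{secsds}: it rescales to put the equation in the form \eqref{sdera} with $f=2/3$, performs the Ornstein--Uhlenbeck conjugation \eqref{sdera0} with the stationary process $z^{\ast}$, solves the resulting random differential equation \eqref{sdera1}, invokes Proposition \ref{3.3.2} to obtain a conjugated RDS, and then extracts $y(t)/t\rightarrow 2/3$ from the ergodic-type properties (2a)--(2b) of Proposition \ref{pro331} before undoing the change of variables. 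Your argument is shorter and needs only property (1) of that proposition (equivalently the law of the iterated logarithm), at the cost of exploiting the very special constant-drift structure; the paper's conjugation argument is heavier but illustrates a template that would survive a nonconstant drift $f$ with bounded derivative. Notably, the paper itself validates your shortcut: in the very next lemma it writes down the same explicit formula $x(t)=t+\frac{3}{2}W_t+\frac{1}{H_0}$ as the starting point for the first-passage computation, so nothing in the subsequent development depends on the RDS detour. One cosmetic caveat: your phrase ``adapted and solves the equation, so it is the unique strong solution'' compresses two facts --- uniqueness comes from the Lipschitz theory you cite, and your explicit process being a solution then identifies it; stated in that order the logic is airtight.
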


\begin{proof}
Existence and uniqueness of a strong and global solution is direct from the standard theory \cite{oksendal}.
To find the asymptotic behavior multiply the SDE by $2/3$ to get
$$
d \left(\frac{2x}{3} \right) = \frac{2}{3} dt + dW_t,
$$
which is of the form \eqref{sdera} with $f=2/3$ (see the Appendix).
Now we will apply to this equation the developments in the last part of the Appendix;
first, doing the change of variables \eqref{sdera0}, which in the present case reads
$$y=\frac{2}{3}x-z^{\ast},$$
leads to \eqref{sdera1}, which now takes the form
$$y(t)=\frac{2}{3}t+\lambda\int_0^t z^{\ast} d\tau .$$
It defines a random dynamical system \cite{arnold} and, by Proposition \ref{3.3.2}, it gives rise to a conjugated random dynamical
system for equation \eqref{cos4}. Subsequently applying Proposition \ref{pro331} yields
$$
\lim_{t\rightarrow \infty}\frac{y(t)}{t}= \frac{2}{3} \quad \text{a.s.}
$$
The statement follows from undoing the change of variables to recover $x$ and applying Proposition \ref{pro331} once more.
\end{proof}

From this result one would be tempted to conclude
\begin{equation*}
Ht\rightarrow1\ \text{a.s. when }t\rightarrow\infty,
\end{equation*}
which is in fact the behavior found for the deterministic model in the Introduction.
However, this conclusion would go through undoing the change of variables
$H=1/x$, which is possible only if $x(t)$ does not cross (or even touch) zero. Unfortunately, this is not the case as we show in the next result.

\begin{lemma}
The stochastic process $x(t)$ that solves SDE \eqref{cos4} subject to $x(0)=1/H_0 >0$ crosses the origin
in finite time with a probability
$$
\exp\left(-\frac{8}{9 H_0}\right).
$$
\end{lemma}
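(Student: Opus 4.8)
The plan is to exploit the fact that \eqref{cos4} can be solved explicitly: since the noise is additive and the drift is constant, its unique strong solution subject to $x(0)=1/H_0$ is the Brownian motion with drift
\begin{equation}
\nonumber
x(t)=\frac{1}{H_0}+t+\frac{3}{2}W_t.
\end{equation}
Writing $\tau_0:=\inf\{t\geq 0:\ x(t)=0\}$ for the first hitting time of the origin (with the convention $\inf\emptyset=+\infty$), the quantity to be computed is precisely $\mathbb{P}(\tau_0<\infty)$, the probability that the trajectory, which starts at the positive value $1/H_0$, ever reaches zero.

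My approach would be the classical one based on an exponential martingale. First I would apply It\^o's formula to $e^{-\alpha x(t)}$ for a constant $\alpha>0$ to be fixed, obtaining
\begin{equation}
\nonumber
d\left(e^{-\alpha x(t)}\right)=e^{-\alpha x(t)}\left[\left(-\alpha+\frac{9}{8}\alpha^2\right)dt-\frac{3}{2}\alpha\, dW_t\right],
\end{equation}
where the coefficient of $dt$ combines the drift $+1$ with the It\^o correction $\tfrac12\left(\tfrac32\right)^2\alpha^2=\tfrac{9}{8}\alpha^2$. Choosing $\alpha=\tfrac{8}{9}$ annihilates the drift term, so that $M_t:=e^{-\frac{8}{9}x(t)}$ is a local martingale; since on $[0,\tau_0]$ one has $x\geq 0$ and hence the stochastic integrand $-\tfrac32\alpha\,e^{-\alpha x}$ is bounded, the stopped process $M_{t\wedge\tau_0}$ is a genuine bounded martingale.

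Next I would apply optional stopping to $M_{t\wedge\tau_0}$. Because $x(s)\geq 0$ for all $s\leq\tau_0$, we have $0<M_{t\wedge\tau_0}\leq 1$ and $\mathbb{E}[M_{t\wedge\tau_0}]=M_0=e^{-\frac{8}{9H_0}}$ for every $t$. It then remains to let $t\to\infty$. On $\{\tau_0<\infty\}$ one has $M_{t\wedge\tau_0}\to e^{-\frac{8}{9}x(\tau_0)}=1$, whereas on $\{\tau_0=\infty\}$ the positivity of the drift forces $x(t)\to+\infty$, hence $M_{t\wedge\tau_0}=e^{-\frac{8}{9}x(t)}\to 0$; the divergence on the survival event follows from $x(t)/t\to 1$ a.s., itself a consequence of Proposition \ref{pro331}(1) (namely $\omega(t)/t\to 0$). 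Therefore $M_{t\wedge\tau_0}\to \1_{\{\tau_0<\infty\}}$ almost surely, and since $0\leq M_{t\wedge\tau_0}\leq 1$ bounded convergence gives
\begin{equation}
\nonumber
e^{-\frac{8}{9H_0}}=\lim_{t\to\infty}\mathbb{E}\left[M_{t\wedge\tau_0}\right]=\mathbb{E}\left[\1_{\{\tau_0<\infty\}}\right]=\mathbb{P}(\tau_0<\infty),
\end{equation}
which is the claimed probability.

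The step I expect to be the most delicate is the passage to the limit $t\to\infty$, i.e.\ the identification of the almost sure limit of $M_{t\wedge\tau_0}$. This requires ruling out that the process ``escapes to $+\infty$ without hitting zero'' in an uncontrolled way, that is, that on $\{\tau_0=\infty\}$ the trajectory genuinely diverges so that $M_{t\wedge\tau_0}\to 0$; the strong-law estimate $x(t)/t\to 1$ furnished by Proposition \ref{pro331} is exactly what closes this gap. An alternative route, bypassing the limit argument, would be to invoke directly the classical first-passage formula $\mathbb{P}(\tau_0<\infty)=\exp\!\left(-2\mu x_0/\sigma^2\right)$ for a Brownian motion with drift $\mu>0$ and volatility $\sigma$ (here $\mu=1$, $\sigma=3/2$, $x_0=1/H_0$), as found in \cite{karatzas}; the martingale computation above is essentially the proof of that formula specialized to the present constants.
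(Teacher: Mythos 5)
Your proof is correct, but it follows a genuinely different route from the paper's. The paper treats the statement as a direct corollary of the first-passage-time problem for Brownian motion with drift: it writes the hitting time $T_a=\inf\{t\geq 0:\tfrac23 t+W_t=a\}$ with $a=-\tfrac{2}{3H_0}$, quotes from Karatzas--Shreve the explicit density $\mathbb{P}(T_a\in dt)=\frac{|a|}{\sqrt{2\pi t^3}}\exp\bigl(-\frac{(a-\frac23 t)^2}{2t}\bigr)dt$ (itself obtained via the Girsanov theorem), and integrates it to get $\mathbb{P}(T_a<\infty)=\exp\bigl(\tfrac23 a-\bigl|\tfrac23 a\bigr|\bigr)=\exp\bigl(-\tfrac{8}{9H_0}\bigr)$. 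You instead rederive the hitting probability from scratch via the exponential martingale $M_t=e^{-\frac{8}{9}x(t)}$ and optional stopping; your It\^o computation is right (the drift $-\alpha+\tfrac98\alpha^2$ vanishes exactly at $\alpha=\tfrac89$, matching the classical exponent $2\mu/\sigma^2=\tfrac89$ for $\mu=1$, $\sigma=\tfrac32$), the stopped process is indeed a bounded true martingale since $x\geq 0$ up to $\tau_0$, and your identification of the limit $M_{t\wedge\tau_0}\to\mathbbm{1}_{\{\tau_0<\infty\}}$ is properly justified by $x(t)/t\to 1$ a.s., which is available from Proposition \ref{pro331} (or directly from the strong law for Brownian motion). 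The trade-off: your argument is self-contained and elementary, needing only It\^o's formula and bounded convergence, whereas the paper's citation of the full first-passage density buys strictly more information --- namely the entire distribution of the blow-up time $T_{-\frac{2}{3H_0}}$, not just the probability that it is finite --- at the cost of invoking a deeper result (the Girsanov-based density) as a black box. Your closing remark correctly notes that the classical formula $\mathbb{P}(\tau_0<\infty)=\exp(-2\mu x_0/\sigma^2)$ subsumes your computation; in effect you have written out the standard proof of the fact the paper cites.
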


\begin{proof}
Note that we have the explicit formula $x(t) = t+\frac{3}{2}W_t+\frac{1}{H_0}$.
Therefore the statement can be seen as a particular case of the first passage time problem for Brownian motion with drift solved in \cite{karatzas}
by means of the Girsanov theorem. Consider the stopping time
$$
T_a:=\inf \left\{t\geq0:\frac{2}{3}t+W_t=a \right\},
$$
for any $a \neq 0$. It possesses the density \cite{karatzas}:
\begin{equation*}
\mathbb{P}(T_a\in dt)=\frac{|a|}{\sqrt{2\pi t^3}}\exp\left(-\frac{(a-\frac{2}{3}t)^2}{2t}\right)dt,\ t>0,
\end{equation*}
and therefore
\begin{equation*}
\mathbb{P}(T_a<\infty)=\exp\left(\frac{2}{3}a-\left|\frac{2}{3}a\right|\right).
\end{equation*}
As in our case $a=-\frac{2}{3 H_0}$ we conclude
\begin{equation*}
\mathbb{P} \left(T_{-\frac{2}{3 H_0}}<\infty \right)=\exp\left(-\frac{8}{9 H_0}\right).
\end{equation*}
\end{proof}

Since $\exp\left(-\frac{8}{9 H_0}\right) \in (0,1)$ for $H_0 \in (0,\infty)$, there is a positive probability that $H(t)$ blows up in finite time.
So, in summary, we obtain the following dual behavior for the solution to SDE \eqref{cosstr}:
\begin{eqnarray*}
H(t)t &\rightarrow& 1 \quad \text{as} \quad t \to \infty \qquad \text{with probability} \quad 1-\exp\left(-\frac{8}{9 H_0}\right), \\ \nonumber
H(t) &\rightarrow& \infty \quad \text{as} \quad t \to T_{-\frac{2}{3 H_0}} \qquad \text{with probability} \quad \exp\left(-\frac{8}{9 H_0}\right);
\end{eqnarray*}
i.e., the solution blows up in finite time with a positive probability $\exp\left(-\frac{8}{9 H_0}\right)$, so the solution ceases to exist in
finite time and therefore the asymptotic behavior is never achieved for these samples. Consequently, we may conclude that this model
presents difficulties in its application to cosmology (at least, in the absence of empirical evidence in favour of such divergence).

\begin{remark}
Despite the divergence, one could be tempted to use the formula for the Hubble parameter
$$
H(t)=(t + 3 W_t/2 + 1/H_0)^{-1}
$$
to define a scale factor
$$
C(t)=\exp \left[ \int_{0}^{t} H(s) ds \right].
$$
If this expression were well-defined, this would open the possibility to somehow reformulate the Stratonovich problem
to still get an almost surely meaningful scale factor. However, one can apply the mathematical framework proven in~\cite{peres}
to find that the singular set of such a Hubble parameter would have Hausdorff dimension $1/2$ with positive probability,
and therefore an uncountable number of divergences. In consequence we cannot make sense of this formal expression for the scale factor
by means of standard improper integration.
\end{remark}

\begin{remark}
Another way out of the problem is to impose a reflecting boundary condition at a sufficiently large value for the Hubble parameter. This would of course
stop the divergence, but one would need to justify the presence of this boundary on physical terms. In the absence of such a justification, it looks
simpler to us to just use the It\^o interpretation, which is free of such a need, as we will show in the next section. Perhaps more hopeful is to
use such a problem to build a sequence of processes that approximates the Stratonovich $H(t)$ as the upper boundary moves towards infinity. The expression
for the scale factor is well-defined for each of these reflected processes and thus one could build a sequence of scale factors out of the sequence
of reflected Hubble parameters. If this sequence converged in some suitable sense to a stochastic process one could, at least mathematically speaking,
define the scale factor as the limit. If this program could be carried out successfully, it would lead to a well-defined scale factor constructed
from an ill-defined Hubble parameter. Since we are not sure of a precise mathematical implementation and
an intuitive physical interpretation of such an approximation procedure, we leave this question open in the present work.
\end{remark}

\begin{remark}
Yet another way out of the problem caused by the finite time divergence is to assume the existence of a multiverse. Each
universe in this multiverse would be characterized by a different realization of the noise, and our universe would be one in which
the divergence does not happen (since indeed the Stratonovich model is divergence free with a positive probability). Because multiverse
physics is at this moment in a speculative stage, we will not discuss more this possibility.
\end{remark}

\subsection{It\^o equation}

Equation \eqref{cos3} can also be written in the different precise manner:
\begin{equation}
\label{cosito}
dH=-H^2dt-\frac{3}{2}H^2 dW_t,
\end{equation}
if we assume the stochastic integral of It\^o.
As for the Stratonovich model, the terms are so regular that
this equation will possess a unique strong solution as long as it remains bounded \cite{oksendal}.
If in this case we change variables again $x=1/H$ then, by the It\^o chain rule, we find
\begin{equation}
\label{cos8}
dx=\left(1+\frac{9}{4x}\right)dt+\frac{3}{2}dW_t.
\end{equation}
Note that this equation, again due to the regularity of its terms,
will possess a unique strong solution as long as it remains bounded and bounded away from zero \cite{oksendal}.
We will show that it is well behaved in the sense we are looking for. In order to analyze it we need to introduce the following family of stochastic
processes.

\begin{definition}\label{bess1}
For any integer $n\geq2$, let $\tilde{W}_t=(W_t^1,\cdots, W_t^n), t\geq0$, be a $n$-dimensional Wiener process. Then the process
$R_t= |W_t|, t\geq0$, is called a Bessel process of dimension $n$.
\end{definition}

\begin{remark}\label{bess2}
A Bessel process of dimension $n$ satisfies the SDE
$$
dR_t= \frac{n-1}{2R_t}dt+ dW_t, \quad R_0=0,
$$
where $W_t$ is the standard one-dimensional Wiener process, see \cite{karatzas}.
\end{remark}

\begin{theorem}\label{main}
The SDE \eqref{cos8} subject to an initial condition $x(0)=1/H_0 >0$ possesses a unique solution which is both strong and global and moreover fulfils
$$
\lim_{t\rightarrow \infty}\frac{x(t)}{t}=1 \quad \text{a.s.}
$$
Furthermore, $x(t)>0$ for all $t \geq 0$ a.s.
\end{theorem}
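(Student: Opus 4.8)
The plan is to exploit the structure of \eqref{cos8} as a perturbed Bessel equation. First I would remove the noise amplitude by the deterministic rescaling $u := \tfrac{2}{3}x$; a direct computation turns \eqref{cos8} into
$$
du = \left(\frac{2}{3} + \frac{1}{u}\right)dt + dW_t, \qquad u(0) = \frac{2}{3H_0} > 0,
$$
which is precisely the SDE of a three-dimensional Bessel process (Remark \ref{bess2} with $n=3$) perturbed by the constant drift $\tfrac{2}{3}$. On the open half-line $(0,\infty)$ the coefficients are smooth, hence locally Lipschitz, so the standard theory \cite{oksendal} provides a unique strong solution up to the first time $u$ exits a compact subinterval $[1/m,m]$; the whole problem therefore reduces to showing that neither boundary, $0$ nor $+\infty$, is reached in finite time.

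For the lower boundary I would argue by comparison. Writing $R_t$ for a three-dimensional Bessel process driven by the same $W_t$ and started at the origin, the drift of $u$ dominates that of $R$ while the diffusion coefficients coincide, so the comparison principle for one-dimensional diffusions \cite{karatzas} gives $u_t \ge R_t$ for all $t$ up to the exit times. Since a Bessel process of dimension $n \ge 2$ leaves the origin at once and never returns, $R_t > 0$ for $t>0$, whence $u_t > 0$; undoing the rescaling yields $x(t) = \tfrac{3}{2}u_t > 0$ for all $t \ge 0$ almost surely. Equivalently, the scale density of $u$ is proportional to $\xi^{-2}e^{-4\xi/3}$, whose integral diverges at $0$, so the scale function tends to $-\infty$ there and the origin is unattainable.

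Global existence then follows from a quantitative version of the same comparison. From the integral form $u_t = u(0) + \tfrac{2}{3}t + \int_0^t u_s^{-1}\,ds + W_t$ and the bound $u_s \ge R_s$, I would control the singular term by $\int_0^t u_s^{-1}\,ds \le \int_0^t R_s^{-1}\,ds$, the latter being finite for every fixed $t$ almost surely (its expectation grows like $\sqrt{t}$). Thus $u_t$ remains confined to a compact subinterval of $(0,\infty)$ on every bounded time interval, so the exit times of $[1/m,m]$ tend to $+\infty$ and the solution is global. A self-contained alternative is Feller's test for explosions \cite{karatzas}: with scale density $\propto \xi^{-2}e^{-4\xi/3}$ and speed density $\propto \xi^{2}e^{4\xi/3}$, the relevant Feller integrals diverge at both endpoints, confirming that the diffusion explodes at neither.

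Finally, for the asymptotics I would return to the integral representation and divide by $t$, obtaining $u_t/t = u(0)/t + \tfrac{2}{3} + t^{-1}\int_0^t u_s^{-1}\,ds + W_t/t$. The lower bound $u_t \ge u(0)+\tfrac{2}{3}t + W_t$ together with $W_t/t \to 0$ (law of iterated logarithm, as in Proposition \ref{pro331}) forces $u_t \to \infty$, so $u_s^{-1}\to 0$ and its Cesàro average vanishes; hence $u_t/t \to \tfrac{2}{3}$ and $x(t)/t = \tfrac{3}{2}\,u_t/t \to 1$ almost surely. The main obstacle throughout is the singularity of the drift at the origin: it places \eqref{cos8} outside the reach of the globally Lipschitz existence theory and of the additive-noise RDS conjugation used in the previous lemma, yet it is exactly this repelling singularity that confines the process to $(0,\infty)$. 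Making the boundary analysis rigorous — the comparison with singular coefficients, or equivalently the Feller classification — is the crux of the argument.
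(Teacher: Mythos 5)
Your proposal is correct and follows essentially the same route as the paper: the rescaling to $du=\left(\tfrac{2}{3}+\tfrac{1}{u}\right)dt+dW_t$, the pathwise comparison from below with the three-dimensional Bessel process driven by the same Brownian motion (which yields positivity and the lower barrier $u(0)+\tfrac{2}{3}t+W_t$), the upper control $\int_0^t u_s^{-1}\,ds\leq\int_0^t R_s^{-1}\,ds$ (which is exactly the paper's upper barrier $z_+(t)=y(t)+\tfrac{2}{3}t+z(0)$ in integral form), and $W_t/t\to 0$ for the asymptotics. Your extras --- the scale/speed densities with Feller's test at both endpoints, and the Ces\`aro argument in place of the paper's use of $y(t)/t\to 0$ via the radial representation of Brownian motion --- are sound but are alternative confirmations of steps the comparison argument already delivers.
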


\begin{proof}
First of all change variables $z=\frac{2}{3}x$ to obtain
\begin{equation*}
dz=\frac{2}{3}dt+\frac{1}{z}dt+dW_t.
\end{equation*}
Note that the SDE
\begin{equation}
\label{cos9R}
dy=\frac{1}{y}dt+dW_t,
\end{equation}
subject to $y(0)=0$
possesses as unique solution the Bessel process $y(t)=R_t$ with dimension $n=3$, see Definition \ref{bess1} and Remark \ref{bess2}.
Then $y(t)>0$ for all $t >0$ almost surely, see \cite{karatzas}, and consequently
$z(t)>0$ for all $t >0$ almost surely provided that, for the same realization of $W_t$, $z(t) \geq y(t)$ for all $t >0$ almost surely.
Note that this should be true at least in some interval $[0,\mathfrak{t}]$,
for some positive stopping time $\mathfrak{t}=\mathfrak{t}(\omega)$, given the
continuity of the trajectories of both $z(t)$ and $y(t)$, and the fact that $z(0)=\frac{2}{3 H_0} > 0 = y(0)$.
Suppose that, on the contrary, there exists a set $\tilde{\Omega}\subset\Omega$ of positive measure such that for all $\omega\in \tilde{\Omega}$
there is a random time $t_{\omega}>\mathfrak{t}>0$
such that $z(t_{\omega},\omega)< y(t_{\omega},\omega)$ while $z(t,\omega)>0$ for all $t \in [0,t_{\omega}]$.
Moreover, by the continuity of the trajectories of both stochastic processes, and given their initial conditions,
there should exist a second random time $t_{\omega}'$, with $0< \mathfrak{t} \leq t_{\omega}'<t_{\omega}$,
such that the inequality $z(t,\omega)< y(t,\omega)$
holds for all $t \in (t_{\omega}',t_{\omega}]$ while $z(t_{\omega}',\omega) = y(t_{\omega}',\omega)$.
Therefore
\begin{flalign*}
z(t_{\omega},\omega)-y(t_{\omega},\omega)&=\frac{2(t_{\omega}-t_{\omega}')}{3}
+\int_{t_{\omega}'}^{t_{\omega}}\left(\frac{1}{z(\tau,\omega)}-\frac{1}{y(\tau,\omega)}\right)d\tau\\
&>\frac{2(t_{\omega}-t_{\omega}')}{3} >0,
\end{flalign*}
in contradiction with $z(t_{\omega},\omega)< y(t_{\omega},\omega)$.
This proves $z(t)$ is positive for all times almost surely and, moreover, that it is lower bounded by the Bessel process with dimension $n=3$.

To prove global boundedness and the asymptotic behavior we proceed again by means of comparison arguments. In particular, we claim that
\begin{equation}
\label{cos9RRR}
z_{-}(t)\leq z(t)\leq z_{+}(t),
\end{equation}
for all $t \geq 0$ almost surely, where
$$z_{-}(t):=\frac{2}{3}t+W_t+z(0),$$
and
$$z_{+}(t):=y(t)+\frac{2}{3}t+z(0),$$
where we, of course, always assume the same realization of $W_t$.
Indeed, in the first case we have
$$z(t,\omega)-z_{-}(t,\omega)=\int_0^{t}\frac{1}{z(\tau,\omega)}d\tau\geq0,$$
almost surely, due to the almost sure positivity of $z(t)$.
While in the second case we find
$$z_{+}(t,\omega)-z(t,\omega)=\int_0^{t}\left(\frac{1}{y(\tau,\omega)}-\frac{1}{z(\tau,\omega)}\right)d\tau\geq0,$$
almost surely since $z(t) \geq y(t)$ almost surely.
Now, since $\frac{W_t}{t}\rightarrow0$ as $t\rightarrow\infty$ almost surely by Proposition \ref{pro331}, then
$$\frac{y(t)}{t}=\sqrt{\left(\frac{W^1_t}{t}\right)^2+\left(\frac{W^2_t}{t}\right)^2+\left(\frac{W^3_t}{t}\right)^2}\rightarrow0\ \text{a.s.}$$
as $t\rightarrow\infty$ by Definition \ref{bess1}. Finally, from \eqref{cos9RRR} we conclude
$$
\lim_{t\rightarrow \infty}\frac{z(t)}{t}= \frac{2}{3} \quad \text{a.s.},
$$
and the statement follows from undoing the change of variables $x=\frac{3}{2}z$.
\end{proof}

From Theorem \ref{main} we conclude that the change of variables $x(t)=1/H(t)$ is well defined for all $t \geq 0$ almost surely, so we can invert it
to conclude
\begin{equation*}
H(t)t \rightarrow 1 \quad \text{as} \quad t \to \infty \qquad \text{with probability} \quad 1.
\end{equation*}

\section{Conclusions}\label{seccon}

In this work we have studied a stochastic version of the Friedmann acceleration equation. This SDE was introduced in \cite{john} as a possible
theoretical explanation of the uncertainty observed in the experimental quantification of the Hubble parameter. In the original model,
the Stratonovich interpretation of noise was tacitly chosen, as the authors analyzed this equation using the standard chain rule
from ordinary calculus. Herein we have shown that this Stratonovich equation presents finite time blow-ups with a positive probability;
in particular this probability is
$$
\exp\left(-\frac{8}{9 H_0}\right).
$$
On the other hand, we have shown that if the interpretation is changed to that of It\^o, then the solution exists globally in time almost surely
and moreover we have the asymptotic behavior
$$
H(t)t \rightarrow 1 \quad \text{as} \quad t \to \infty \qquad \text{with probability} \quad 1,
$$
or, in other words, we have for almost every sample the universal decay rate $H(t) \approx 1/t$, with universal amplitude, for long times.
This coincides with both the deterministic behavior (which of course takes place surely) and the long time dynamics of the Stratonovich
SDE in those cases in which the solution does not blow up in finite time. Yet another advantage of the It\^o interpretation is that
it admits the explicit bounds
$$
\frac{1}{\frac{3}{2}y(t)+t+\frac{1}{H_0}} \leq H(t) \leq \frac{1}{\max\{\frac{3}{2}W_t+t+\frac{1}{H_0},\frac{3}{2}y(t)\}}
$$
for all times $t \geq 0$ almost surely, where $y(t)$ is the Bessel process with dimension $n=3$ that solves equation \eqref{cos9R};
this is a direct consequence of the proof of Theorem \ref{main}.

Although it seems that the Stratonovich interpretation is more accepted for applications in physics, we have herein highlighted an example
in the field of cosmology in which the It\^o interpretation looks more advantageous.
This is not the only case, as other problems in statistical mechanics are somehow
similar in this respect \cite{ce,escudero,escudero2}. In these previous cases, however, the uniqueness of solution was lost in the Stratonovich
equation, while preserved in the It\^o one. In the present case, it is the (global in time) existence of solution what is lost for Stratonovich
and preserved for It\^o. Another difference between these previous statistical mechanical examples and the current cosmological one is that
the deficiency of the Stratonovich equation was of a more fundamental character there, as there is a Stratonovich equation equivalent
to equation \eqref{cosito} due to the smoothness of its diffusion; this did not happen in the other cases because the equations under study there
did not present regular enough diffusion terms. Anyway, in the problem we have analyzed herein, it seems much simpler to employ the It\^o
interpretation rather than try to circumvent it via an equivalent Stratonovich formulation. In summary, we have illustrated one example in
cosmology for which It\^o versus Stratonovich means global existence versus finite time blow-up.
Of course, this does not mean that physical arguments are not important in the derivation of stochastic differential models. They are fundamental and,
in fact, we have used them in the derivation of the models considered herein. However, the mathematical consistency of a physically derived
model must be also checked. In those cases in which basic properties such as existence and uniqueness of solution fail, either there is a physical
reason for this failure, or the model needs to be rectified.

\section*{Acknowledgments}

This work has been partially supported by the Government of Spain (Ministerio de Ciencia, Innovaci\'on y Universidades)
through Project PGC2018-097704-B-I00.

\section*{Appendix: Stochastic dynamical systems}

In this section we provide a summary of the mathematical framework we use, that of random and stochastic dynamical systems.
To this end we follow \cite{arnold}, \cite{caraballo}, and \cite{crauel}. We start introducing the definition of
random dynamical system. We use the notation $\mathbb{R}_+:=[0,\infty)$.

\begin{definition}\label{dds}
Let $(\Omega,\mathcal{F},\mathbb{P})$ be a probability space and $\theta=\{\theta_t\}_{t\in\mathbb{R}}$ a map
$\theta:\mathbb{R}\times\Omega\mapsto\Omega$, satisfying
\begin{enumerate}
\item $\theta_0(\omega)=\omega,\ \forall \omega\in\Omega$.
\item $\theta_{t+\tau}(\omega)=\theta_t\circ\theta_{\tau}(\omega),\ \forall t,\tau\in\mathbb{R}_+$.
\item $(t,\omega)\mapsto \theta_{t}(\omega)$ is measurable in $(\mathcal{B}(\mathbb{R})\times\mathcal{F},\mathcal{F})$, and $\theta_t\mathbb{P}=\mathbb{P}$ for all $t\in\mathbb{R}$ (i.e. $\mathbb{P}(\theta_t(\omega))=\mathbb{P}(\omega)$).
\end{enumerate}
Then $(\Omega,\mathcal{F},\mathbb{P};\theta)$ is called a driving dynamical system and $\theta$ is called a driver.
\end{definition}
\begin{definition}
Let $(\Omega,\mathcal{F},\mathbb{P})$ be a probability space. A random dynamical system (RDS)
over $\mathbb{R}^d$ ($d=1,2,3,\cdots$) consists on a pair $(\theta,\varphi)$ with $\theta$ a driver
and a cocycle mapping $\varphi:\mathbb{R}_{+}\times\Omega\times\mathbb{R}^d\mapsto\mathbb{R}^d$ satisfying:
\begin{enumerate}
\item $\varphi(0,\omega,x)=x, \forall \omega\in\Omega, x\in\mathbb{R}^d$.
\item $\varphi(t+s,\omega,x)=\varphi(t,\theta_s(\omega),\varphi(s,\omega,x)),\ \forall t,s\in\mathbb{R}_{+}, \omega\in\Omega, x\in\mathbb{R}^d$.
\item $(t,\omega,x)\mapsto\varphi(t,\omega,x)$ is measurable.
\item $x\mapsto\varphi(t,\omega,x)$ is continuous $\forall t\in \mathbb{R}_{+}$ a.s.
\end{enumerate}
\end{definition}
Henceforth we fix $d=1$ and take the sample space $\Omega=C_0(\mathbb{R},\mathbb{R})$, the space of continuous functions
that depend on a real variable, take values on the reals, and vanish at the origin; let $\mathcal{F}$ be its Borel sigma-field.
Moreover we take $\mathbb{P}$ to be the Wiener measure, $\theta$ the Wiener shift
$$
\theta_t \omega(\cdot) = \omega(t + \cdot) - \omega(t),
$$
and $\omega(t)$ two-sided Brownian motion, with the usual definition (two independent Wiener processes, respectively
with positive and negative time, glued at the origin).
In this setting consider the one dimensional linear stochastic differential equation (SDE)
\begin{equation}
\label{ou1}
dz(t)=-\lambda z(t)dt+dW_t,
\end{equation}
for some $\lambda>0$ and $W_t(\omega)=\omega(t)$. Note that this is an Orstein-Uhlenbeck SDE.
Then, the following result holds \cite{caraballo}.
\begin{proposition}
\label{pro331}
Let $\lambda>0$. Then there exists a $\{\theta_t\}_{t\in\mathbb{R}}$-invariant
(in the sense of (3) in Definition~\ref{dds}) subset $\overline{\Omega}\in\mathcal{F}$ of $\Omega=C_0(\mathbb{R},\mathbb{R})$
with unit measure such that for all $\omega \in \overline{\Omega}$:
\begin{enumerate}
\item $\lim_{t\rightarrow\pm\infty}\frac{\omega(t)}{t}=0$.
\item The random variable given by $z^{\ast}(\omega):=-\lambda\int_{-\infty}^{0}e^{\lambda \tau}\omega(\tau)d\tau$ is well defined, and the map
$$(t,\omega)\mapsto z^{\ast}(\theta_t\omega)=-\lambda\int_{-\infty}^{0}e^{\lambda\tau}\theta_{t}(\omega(\tau))d\tau=-\lambda\int_{-\infty}^{0}e^{\lambda\tau}\omega(t+\tau)d\tau+\omega(t),$$
is a stationary solution of \eqref{ou1}, with continuous trajectories satisfying:
\begin{enumerate}
\item $\lim_{t\rightarrow\pm\infty}\frac{|z^{\ast}(\theta_t\omega)|}{|t|}=0$,
\item $\lim_{t\rightarrow\pm\infty}\frac{1}{t}\int_0^tz^{\ast}(\theta_{\tau}\omega)d\tau=0$,
\item $\lim_{t\rightarrow\pm\infty}\frac{1}{t}\int_0^t|z^{\ast}(\theta_{\tau}\omega)|d\tau<\infty$.
\end{enumerate}
\end{enumerate}
\end{proposition}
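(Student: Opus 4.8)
The plan is to construct the invariant set $\overline{\Omega}$ from the law of the iterated logarithm and then deduce every claim from a single pathwise representation of the candidate stationary solution. First I would invoke the law of the iterated logarithm for two-sided Brownian motion, which yields $\omega(t)/t\to 0$ as $t\to\pm\infty$ on a set of full measure; this is precisely statement (1), and it records the sublinear growth $|\omega(\tau)|=o(|\tau|)$ that I will use throughout. With this bound, the integrand $e^{\lambda\tau}\omega(\tau)$ is dominated near $\tau\to-\infty$ by the exponentially decaying weight, so $\int_{-\infty}^{0}e^{\lambda\tau}\omega(\tau)\,d\tau$ converges absolutely and $z^{\ast}(\omega)$ is well defined.

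Next I would establish the key identity. Substituting the Wiener shift $\theta_t\omega(\tau)=\omega(t+\tau)-\omega(t)$ into $z^{\ast}$ and using $\lambda\int_{-\infty}^{0}e^{\lambda\tau}\,d\tau=1$ produces the displayed formula $z^{\ast}(\theta_t\omega)=-\lambda\int_{-\infty}^{0}e^{\lambda\tau}\omega(t+\tau)\,d\tau+\omega(t)$. A pathwise integration by parts then rewrites this as $\int_{-\infty}^{t}e^{-\lambda(t-s)}\,dW_s$, the classical stationary Orstein--Uhlenbeck integral, the boundary term at $-\infty$ vanishing precisely because of the sublinear growth from the first step. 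This identifies $Z(t,\omega):=z^{\ast}(\theta_t\omega)$ with the stationary solution of \eqref{ou1}, and stationarity of its law follows from the measure-preserving property $\theta_t\mathbb{P}=\mathbb{P}$ of the driver. Since the diffusion coefficient is constant, the It\^o and Stratonovich readings of \eqref{ou1} coincide, so there is no interpretational ambiguity at this point.

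For the asymptotic sub-properties I would argue as follows. Claim (a), $|Z(t)|/|t|\to 0$, follows from the representation together with sublinear growth: writing $\frac{1}{t}\int_{-\infty}^{0}e^{\lambda\tau}\omega(t+\tau)\,d\tau=\int_{-\infty}^{0}e^{\lambda\tau}\frac{\omega(t+\tau)}{t}\,d\tau$ and noting that $\omega(t+\tau)/t\to 0$ pointwise in $\tau$, a dominated-convergence estimate with majorant $e^{\lambda\tau}(1+|\tau|)$ pushes the limit inside the integral. Claim (b) I would derive directly from the equation: integrating \eqref{ou1} gives $\int_{0}^{t}Z\,d\tau=\frac{1}{\lambda}\left[\omega(t)-Z(t)+Z(0)\right]$, so dividing by $t$ and applying (1) and (a) sends the time average to $0$. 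Finally, claim (c) is a Birkhoff ergodic theorem statement: $|z^{\ast}|$ is integrable, being the absolute value of a Gaussian random variable, so the ergodic theorem for the measure-preserving flow $\theta_t$ yields $\frac{1}{t}\int_{0}^{t}|Z(\tau)|\,d\tau\to\mathbb{E}|z^{\ast}|<\infty$ almost surely.

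The set $\overline{\Omega}$ is then the intersection of the full-measure sets arising from the law of the iterated logarithm and from the ergodic-theorem convergence, rendered $\theta_t$-invariant by the standard device of intersecting its preimages under the shift; since $\theta_t$ preserves $\mathbb{P}$, the resulting set still has unit measure. The main obstacle I anticipate is making the dominated-convergence estimate in (a) uniform as $t\to\pm\infty$ — the bound $|\omega(\tau)|=o(|\tau|)$ is only asymptotic, so the majorant must be chosen carefully to control the contribution near $\tau=-t$ — together with verifying the integrability and ergodicity hypotheses needed to invoke Birkhoff's theorem for (c).
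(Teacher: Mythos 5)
Your proposal is correct and shares the paper's skeleton: the law of the iterated logarithm for item (1), the pathwise integration by parts identifying $z^{\ast}(\theta_t\omega)$ with the stationary Ornstein--Uhlenbeck integral $\int_{-\infty}^{t}e^{-\lambda(t-s)}\,dW_s$, stationarity via the $\theta_t$-invariance of the Wiener measure, and Birkhoff's theorem for the time average in (2c). Where you genuinely depart from the paper is in (2a) and (2b): the paper disposes of ``the other statements'' by citing the ergodic theorem and the Burkholder inequality, whereas you prove (2a) by a purely deterministic dominated-convergence bound --- using the linear-growth envelope $|\omega(s)|\leq C(\omega)(1+|s|)$ that follows from (1) plus continuity, which indeed handles the region $\tau\approx -t$ you flag as the delicate spot --- and (2b) by simply integrating \eqref{ou1} to get $\int_0^t z^{\ast}(\theta_\tau\omega)\,d\tau=\frac{1}{\lambda}\bigl(\omega(t)-z^{\ast}(\theta_t\omega)+z^{\ast}(\omega)\bigr)$ and invoking (1) and (2a). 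This buys a more self-contained, pathwise argument: no moment estimates are needed, and ergodicity (as opposed to mere measure preservation) of the Wiener shift is needed only if one wants the limit in (2c) to equal the constant $\mathbb{E}|z^{\ast}|$; finiteness, which is all the statement asserts, already follows from Birkhoff plus integrability of the Gaussian $z^{\ast}$. One small repair: your device of intersecting all preimages $\theta_t^{-1}(A)$ is an uncountable intersection, whose measurability is not automatic; it is cleaner to observe that the set defined by properties (1) and (2a)--(2c) is already $\theta$-invariant as it stands, since each property is stable under every shift (via $z^{\ast}(\theta_t\theta_s\omega)=z^{\ast}(\theta_{t+s}\omega)$ and the asymptotic nature of the conditions), so no surgery is required. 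Neither this nor anything else in your plan constitutes a genuine gap.
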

\begin{proof}[Proof]
We just give a sketch of the proof and refer the reader to the references for more details.
\begin{enumerate}
\item follows from the law of iterated logarithms (see for instance \cite{karatzas}):
$$\mathbb{P}\left\{\limsup_{t\rightarrow\infty}\frac{\omega(t)}{\sqrt{2t\log\log t}}=1\right\}=\mathbb{P}\left\{\limsup_{t\rightarrow\infty}\frac{\omega(t)}{\sqrt{2t\log\log 1/t}}=1\right\}=1.$$
\item is a consequence of the fact that the process $(\omega,t)\mapsto\int_{-\infty}^{t}e^{-\lambda(t-\tau)}dW_{\tau}(\omega)$
is a solution to \eqref{ou1} with $z(0)=0$ and of integration by parts.
The stationarity of this solution follows from the invariance of the Wiener measure $dW_t$
with respect to the flow $\{\theta_t\}_t$. The other statements follow from the ergodic theorem and the Burkholder inequality
(see for instance \cite{karatzas}).
\end{enumerate}
\end{proof}

At this point, we will employ the theory of conjugated RDSs to SDEs with additive noise. To that end we will need the following result \cite{caraballo}.

\begin{proposition}
\label{3.3.2}
Let $(\theta,\varphi)$ be a RDS on $\mathbb{R}$. Suppose that the mapping $T:\Omega\times\mathbb{R}\mapsto\mathbb{R}$ possesses the following properties:
\begin{enumerate}
\item The mapping $T(\omega,\cdot)$ is a homeomorphism over $\mathbb{R}$ for every $\omega\in \Omega$.
\item The mappings $T(\cdot,x)$ and $T^{-1}(\cdot,x)$ are measurable for every $x\in\mathbb{R}$.
\end{enumerate}
Then, the mapping $(t,\omega,x)\mapsto\phi(t,\omega,x):=T^{-1}(\theta_t(\omega),\varphi(t,\omega,T(\omega,x)))$
defines a conjugated random dynamical system.
\end{proposition}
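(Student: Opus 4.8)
The plan is to verify directly that the mapping $\phi$ defined in the statement satisfies the four defining properties of a cocycle in the definition of an RDS, treating $T$ as a conjugacy that transports the cocycle structure of $\varphi$. Since $\theta$ is already a driver by assumption, it suffices to check the initial-value condition, the cocycle identity, measurability, and continuity for $\phi$.

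First I would settle the initial value. Using $\theta_0(\omega)=\omega$ and $\varphi(0,\omega,y)=y$ one obtains immediately
\[
\phi(0,\omega,x)=T^{-1}\!\left(\omega,\varphi(0,\omega,T(\omega,x))\right)=T^{-1}(\omega,T(\omega,x))=x,
\]
where the last equality is exactly the fact that $T^{-1}(\omega,\cdot)$ inverts $T(\omega,\cdot)$ for each fixed $\omega$.

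The core of the argument is the cocycle identity. I would expand $\phi(t+s,\omega,x)$, insert $\theta_{t+s}=\theta_t\circ\theta_s$ and the cocycle property of $\varphi$, and then recognize that applying $T(\theta_s\omega,\cdot)$ to the inner expression cancels against the $T^{-1}(\theta_s\omega,\cdot)$ hidden in $\phi(s,\omega,x)$. Concretely, writing $y:=T(\omega,x)$,
\[
\phi(t+s,\omega,x)=T^{-1}\!\left(\theta_t(\theta_s\omega),\,\varphi(t,\theta_s\omega,\varphi(s,\omega,y))\right),
\]
while on the other side $T\!\left(\theta_s\omega,\phi(s,\omega,x)\right)=\varphi(s,\omega,y)$, so that $\phi(t,\theta_s\omega,\phi(s,\omega,x))$ reduces to the very same expression. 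The key point is that the conjugating map is evaluated at the shifted sample point $\theta_s\omega$ in both of its appearances, which is precisely what makes the two instances of $T$ and $T^{-1}$ annihilate; this is where properties (1) of the hypothesis and (2) of the driver do the work.

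Finally I would dispatch the regularity conditions. Continuity of $x\mapsto\phi(t,\omega,x)$ a.s. follows because it is the composition of the three maps $T(\omega,\cdot)$, $\varphi(t,\omega,\cdot)$, and $T^{-1}(\theta_t\omega,\cdot)$, each continuous: the outer two by the homeomorphism hypothesis (1) and the middle one by property (4) of $\varphi$. For joint measurability of $(t,\omega,x)\mapsto\phi(t,\omega,x)$ I would combine the measurability of $\varphi$ and $\theta$ with hypothesis (2). The one subtlety, and the only genuinely technical point, is that (2) gives measurability of $T$ and $T^{-1}$ in $\omega$ only for fixed $x$; I would upgrade this to joint measurability in $(\omega,x)$ by a Carath\'eodory-type argument, using that $T(\omega,\cdot)$ and its inverse are continuous in $x$. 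Everything else is the algebraic bookkeeping of the conjugation, so I expect this measurability upgrade to be the main obstacle.
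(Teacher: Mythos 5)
Your proposal follows essentially the same route as the paper's proof: verify $\phi(0,\omega,x)=x$ using $\theta_0(\omega)=\omega$ and $\varphi(0,\omega,\cdot)=\mathrm{id}$, then establish the cocycle identity by exactly the cancellation you describe, namely that $T\left(\theta_s\omega,\phi(s,\omega,x)\right)=\varphi(s,\omega,T(\omega,x))$ so the two instances of $T$ and $T^{-1}$ at the shifted sample point annihilate. You in fact go further than the paper, whose proof checks only these two algebraic properties and leaves the measurability and continuity conditions unverified; your composition-of-homeomorphisms argument for a.s.\ continuity in $x$ and the Carath\'eodory-type upgrade of hypothesis (2) to joint measurability of $T$ and $T^{-1}$ are correct and supply precisely what the paper's proof omits.
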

\begin{proof}
First, note that
$$T^{-1}(\theta_0(\omega),\varphi(0,\omega,T(\omega,x)))=T^{-1}(\omega,T(\omega,x))=x.$$
Then, let us see that the property of cocycles is fulfilled:
\begin{flalign*}
\phi(t+\tau,\omega,x)&=T^{-1}(\theta_{t+\tau}(\omega),\varphi(t+\tau,\omega,T(\omega,x)))\\
&=T^{-1}(\theta_{t}(\theta_{\tau}(\omega)),\varphi(t,\theta_{\tau}(\omega),\varphi(\tau,\omega,T(\omega,x)))\\
&=\phi(t,\theta_{\tau}(\omega),\phi(\tau,\omega,x)).
\end{flalign*}
\end{proof}

Now, following \cite{caraballo}, we apply this theory to a SDE with additive noise of the form
\begin{equation}
\label{sdera}
dx(t,\omega)=f(x)dt+dW_t,\ x(0,\omega)=x_0\in\mathbb{R},
\end{equation}
where we suppose that $f$ is continuously differentiable with bounded derivative,
so that a unique strong solution to \eqref{sdera} exists almost surely (see \cite{oksendal} for milder conditions for existence and uniqueness).
Subsequently perform the change of variables
\begin{equation}
\label{sdera0}
y(t,\omega)=x(t,\omega)-z^{\ast}(\theta_t(\omega)),
\end{equation}
where $z^{\ast}$ is defined as in Proposition \ref{pro331}. Then it holds that
\begin{equation*}
y(0,\omega)=x(0,\omega)-z^{\ast}(\omega)=x_0-z^{\ast}(\omega).
\end{equation*}
So we obtain
\begin{flalign*}
dy(t,\omega)&=dx(t,\omega)-dz^{\ast}\\
&=f(x(t,\omega))dt+dW_t(\omega)-(-\lambda z^{\ast}(\theta_t(\omega))dt+dW_t(\omega))\\
&=(f(x(t,\omega))+\lambda z^{\ast}(\theta_t(\omega)))dt\\
&=(f(y(t,\omega)+z^{\ast}(\theta_t(\omega)))+\lambda z^{\ast}(\theta_t(\omega)))dt,
\end{flalign*}
and thus
\begin{equation}
\label{sdera1}
\frac{dy}{dt}(t,\omega)=f(y(t,\omega)+z^{\ast}(\theta_t(\omega)))+\lambda z^{\ast}(\theta_t(\omega)).
\end{equation}
Equation \eqref{sdera1} is a random differential equation, and $f$ is regular enough so that it generates a RDS $\varphi$ \cite{arnold};
then from Proposition \ref{3.3.2} we obtain a conjugated RDS for equation \eqref{sdera}.
Indeed, in this case, the map $T:\Omega\times\mathbb{R}\mapsto\mathbb{R}$ is given by
$$T(\omega,x):=x-z^{\ast}(\omega),$$
which satisfies the hypotheses of Proposition \ref{3.3.2}. Therefore,
\begin{flalign*}
\phi(t,\omega,x_0)&=T^{-1}(\theta_t(\omega),\varphi(t,\omega,T(\omega,x_0)))\\
&=T^{-1}(\theta_t(\omega),\varphi(t,\omega,x-z^{\ast}(\omega)))\\
&=y(t,\omega,x_0-z^{\ast}(\omega))+z^{\ast}(\omega)\\
&=x(t,\omega,x_0),
\end{flalign*}
is our conjugated RDS for the SDE \eqref{sdera}.

\vskip10mm
\noindent
{\footnotesize
Carlos Escudero\par\noindent
Departamento de Matem\'aticas Fundamentales\par\noindent
Universidad Nacional de Educaci\'on a Distancia\par\noindent
{\tt cescudero@mat.uned.es}\par\vskip1mm\noindent
}
\vskip1mm
\noindent
{\footnotesize
Carlos Manada\par\noindent
Facultad de Ciencias\par\noindent
Universidad Nacional de Educaci\'on a Distancia\par\noindent
{\tt cmanada1@alumno.uned.es}\par\vskip1mm\noindent
}

\end{document}